\renewcommand{\Im}{\operatorname{Im}}
\newcommand{\Res}{\operatorname{Res}}
\newcommand{\ga}{\gamma}
\newcommand{\sn}{\operatorname{sn}}
\newcommand{\RR}{\mathbb{R}}
\newcommand{\ka}{\kappa}
\theoremstyle{plain}
\newtheorem{theorem}{Theorem}
\newtheorem{proposition}[theorem]{Proposition}
\theoremstyle{definition}
\newtheorem{remark}[theorem]{Remark}
\begin{document}

\title{On symmetric primitive potentials}

\author[P.~Nabelek]{Patrik Nabelek}
\address{Department of Mathematics, Oregon State University, Corvallis, OR}
\email{patrik@alyrica.net}
\author[D.~Zakharov]{Dmitry Zakharov}
\address{Department of Mathematics, Central Michigan University, Mount Pleasant, MI}
\email{dvzakharov@gmail.com}
\author[V.~Zakharov]{Vladimir Zakharov}
\address{Department of Mathematics, University of Arizona, Tucson, AZ 85721 \\ L.~D.~Landau Institute for Theoretical Physics, Chernogolovka, Moscow Region, Russian Federation }
\email{zakharov@math.arizona.edu}

\begin{abstract} The concept of a primitive potential for the Schr\"odinger operator on the line was introduced in \cite{DZZ16,ZDZ16,ZZD16}. Such a potential is determined by a pair of positive functions on a finite interval, called the dressing functions, which are not uniquely determined by the potential. The potential is constructed by solving a contour problem on the complex plane. In this paper, we consider a reduction where the dressing functions are equal. We show that in this case, the resulting potential is symmetric, and describe how to analytically compute the potential as a power series. In addition, we establish that if the dressing functions are both equal to one, then the resulting primitive potential is the elliptic one-gap potential.

\begin{description}
\item[Keywords] integrable systems, Schr\"odinger equation, primitive potentials
\end{description}

\end{abstract}

\maketitle




\section{Introduction}

One of the fundamental insights underlying the modern theory of integrable systems is the discovery of an intimate relationship between certain linear differential or difference operators, on one hand, and corresponding nonlinear equations on the other. The first of these relationships to be discovered, and arguably the most important one, is the link between the one-dimensional Schr\"odinger equation on the real axis
\begin{equation}
-\psi''+u(x)\psi=E\psi,\quad -\infty<x<\infty, \label{eq:Sch}
\end{equation}
and the Korteweg--de Vries equation
\begin{equation}
u_t(x,t)=6u(x,t)u_x(x,t)-u_{xxx}(x,t).
\label{eq:KdV}
\end{equation}
The study of solutions of the KdV equation has proceeded hand-in-hand with an analysis of the spectral properties of the Schr\"odinger operator that is applied to $\psi$ on the left hand side of the Schr\"odinger equation (\ref{eq:Sch}).

There are three broad methods for constructing solutions of the KdV equation, based on restricting the potentials of the Schr\"odinger operator. The inverse scattering method (ISM) allows us to construct potentials, and hence solutions of the KdV equation, that are rapidly vanishing as $x\to \pm \infty$. Such potentials have a finite discrete spectrum for $E<0$ and a doubly degenerate continuous spectrum for $E>0$, and a subset of them, corresponding to multisoliton solutions of the KdV equation, are reflectionless for positive energies. The finite-gap method, on the other hand, constructs periodic and quasi-periodic potentials of the Schr\"odinger operator \eqref{eq:Sch} whose spectrum consists of finitely many allowed bands, one infinite, separated by forbidden gaps. These potentials are reflectionless in the allowed bands.

Both of these methods construct globally defined solutions of the KdV equation. The third method, called the dressing method \cite{ZakharovManakov}, constructs solutions locally near a given point on the $(x,t)$-plane. An advantage of the method is that the constructed solutions can be quite general. However, the problem of extending such solutions to the entire $(x,t)$-plane is a difficult one.

Our work is motivated by a pair of related questions. First, one can ask what is the exact relationship between the ISM and the finite-gap method, and whether they can both be generalized by the dressing method. It has long been known that multisoliton solutions of the KdV equation are limits of finite-gap solutions corresponding to rational degenerations of the spectral curve. However, the converse relationship, which would consist in obtaining finite-gap solutions as limits of multisoliton solutions, has not been worked out. Additionally, one can ask which potentials of the Schr\"odinger operator, other than the finite-gap ones, have a band-like structure.

In the papers \cite{DZZ16,ZDZ16,ZZD16}, the second and third authors presented a method for constructing potentials of the Schr\"odinger operator \eqref{eq:Sch}, called {\it primitive potentials}, that provides partial answers to these questions. Primitive potentials are constructed by directly implementing the dressing method, and can be thought of as the closure of the set of multisoliton potentials. This procedure involves a reformulation of the ISM that is inherently symmetric with respect to the involution $x\to -x$, and the resulting primitive potentials are non-uniquely determined by a pair of positive, H\"older-continuous functions, called the {\it dressing functions}, defined on a finite interval. 

In this paper we continue the study of primitive potentials. We consider primitive potentials defined by a pair of dressing functions that are equal. Such potentials are symmetric with respect to the reflection $x\to -x$. We show that the contour problem defining symmetric primitive potentials can be solved analytically, and we give an algorithm for computing the Taylor coefficients of a primitive potential. In the case when the dressing functions are both identically equal to 1, we show that the corresponding primitive potential is the elliptic one-gap potential.

\section{Primitive potentials}



In this section, we recall the definition of primitive potentials, which were first introduced in the papers \cite{DZZ16,ZDZ16,ZZD16} as generalizations of finite-gap potentials. Primitive potentials are constructed by taking the closure of the set of $N$-soliton potentials as $N\to \infty$, so we begin by summarizing the inverse scattering method (ISM) as a contour problem (see \cite{ZMNP80}, \cite{GT09}). The finite-gap method is symmetric with respect to the transformation $x\to -x$, while the ISM is not, so we give an alternative formulation of the ISM (in the reflectionless case) that takes this symmetry into account.


\subsection{The inverse scattering method}
Consider the self-adjoint Schr\"odinger operator
\begin{equation}
L(t)=-\frac{d^2}{dx^2}+u(x,t)
\end{equation}
on the Sobolev space $H^2(\RR)\subset L^2(\RR)$. We suppose that the potential $u(x,t)$ rapidly decays at infinity when $t=0$:
\begin{equation}
\int_{-\infty}^{\infty}(1+|x|)(|u(x,0)|+|u_x(x,0)|+|u_{xx}(x,0)|+|u_{xxx}(x,0)|)\,dx<\infty \label{eq:udecay}
\end{equation}
and satisfies the KdV equation \eqref{eq:KdV}. Under this assumption, the spectrum of $L(t)$ consists of an absolutely continuous part $[0,\infty)$ and a finite number of eigenvalues $-\ka_1^2,\ldots,-\ka_N^2$ that do not depend on $t$. There exist two Jost solutions $\psi_{\pm}(k,x,t)$ such that
\begin{equation}
L(t)\psi_{\pm}(k,x,t)=k^2\psi_{\pm}(k,x,t),\quad \Im(k)>0, \label{eq:Jost}
\end{equation}
with asymptotic behavior
\begin{equation}
\lim_{x\to \pm\infty} e^{\mp ikx}\psi_{\pm}(k,x,t)=1.
\end{equation}
The Jost solutions $\psi_{\pm}$ are analytic for $\Im k>0$ and continuous for $\Im k\geq 0$, and have the following asymptotic behavior as $k\to \infty$ with $\Im k>0$:
\begin{equation}
\psi_{\pm}(k,x,t)=e^{\pm ikx}\left(1+Q_{\pm}(x,t)\frac{1}{2ik}+O\left(\frac{1}{k^2}\right)\right),
\end{equation}
where
\begin{equation}
Q_+(x,t)=-\int_x^{\infty} u(y,t)\, dy,\quad Q_-(x,t)=-\int_{-\infty}^x u(y,t)\, dy.
\end{equation}



The Jost solutions satisfy the scattering relations
\begin{equation}
T(k)\psi_{\mp}(k,x,t)=\overline{\psi_{\pm}(k,x,t)}+R_{\pm}(k,t)\psi_{\pm}(k,x,t),\quad k\in \RR,
\end{equation}
where $T(k)$ and $R_{\pm}(k,t)$ are the transmission and reflection coefficients, respectively. These coefficients satisfy the following properties:

\begin{proposition} The transmission coefficient $T(k)$ is meromorphic for $\Im k>0$ and is continuous for $\Im k\geq 0$. It has simple poles at $i\ka_1,\ldots,i\ka_N$ with residues
\begin{equation}
\Res_{i\ka_j} T(k)=i\mu_j(t)\ga_{j} (t)^2,
\end{equation}
where
\begin{equation}
\ga_{j}(t)^{-1}=||\psi_+(i\ka_j,x,t)||_2,\quad \psi_+(i\ka_j,x,t)=\mu_j(t)\psi_-(i\ka_j,x,t).
\end{equation}
Furthermore,
\begin{equation}
T(k)\overline{R_+(k,t)}+\overline{T(k)}R_-(k,t)=0,\quad |T(k)|^2+|R_{\pm}(k,t)|^2=1.
\end{equation}
If we denote $R(k,t)=R_+(k,t)$, $R(k)=R(k,0)$, and $\ga_j=\ga_j(0)$, then
\begin{equation}
T(-k)=\overline{T(k)},\quad R(-k)=\overline{R(k)},\quad k\in \RR, \label{eq:symmetryofreflection},
\end{equation}
\begin{equation}
|R(k)|<1\mbox{ for }k\neq 0,\quad R(0)=-1\mbox{ if }|R(0)|=1,
\end{equation}
and the function $R(k)$ is in $C^2(\RR)$ and decays as $O(1/|k|^3)$ as $|k|\to \infty$. The time evolution of the quantities $R(k,t)$ and $\ga_j(t)$ is given by
\begin{equation}
R(k,t)=R(k)e^{8ik^3t},\quad \ga_j(t)=\ga_j e^{4\ka_j^3 t}.
\end{equation}

\end{proposition}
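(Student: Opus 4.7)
The plan is to derive the statement from the analytic properties of the Jost solutions together with the Lax-pair representation of the KdV flow. I would organize the argument in four blocks.

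First, by taking Wronskians of the scattering relation against $\psi_\pm$ and $\overline{\psi_\pm}$ and pinning down the constants from the $x\to\pm\infty$ asymptotics of the Jost solutions, I obtain
\begin{equation*}
T(k)=\frac{2ik}{W(\psi_-(k,\cdot,t),\psi_+(k,\cdot,t))},\qquad R_+(k,t)=-\frac{W(\psi_-(k,\cdot,t),\overline{\psi_+(k,\cdot,t)})}{W(\psi_-(k,\cdot,t),\psi_+(k,\cdot,t))}.
\end{equation*}
Analyticity of $\psi_\pm(k,x,t)$ for $\Im k>0$ follows from the Volterra integral equations under \eqref{eq:udecay} and transfers to $T$. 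The poles occur precisely where $W(\psi_-,\psi_+)$ vanishes, which happens exactly at $k=i\ka_j$ because $\psi_+(i\ka_j,\cdot)=\mu_j\psi_-(i\ka_j,\cdot)$. To compute the residue I differentiate the Schr\"odinger equation in $k$ and apply the Lagrange identity: integrating $\pa_x W(\pa_k\psi_-,\psi_+)=2k\psi_-\psi_+$ across the real line (the boundary terms vanish because the $L^2$ eigenfunction decays at both ends) yields $\pa_k W(\psi_-,\psi_+)|_{k=i\ka_j}=2i\ka_j\mu_j^{-1}\|\psi_+(i\ka_j,\cdot,t)\|_2^2$; dividing into $2ik$ and using $\|\psi_+\|_2=\ga_j(t)^{-1}$ produces $\Res_{i\ka_j} T=i\mu_j(t)\ga_j(t)^2$.

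Second, the unitarity identities $T\overline{R_+}+\overline{T}R_-=0$ and $|T|^2+|R_\pm|^2=1$ come from combining the two scattering relations and evaluating Wronskians on the real axis, using $W(\psi_\pm,\overline{\psi_\pm})=\mp 2ik$. The symmetry relations $T(-k)=\overline{T(k)}$ and $R(-k)=\overline{R(k)}$ follow from $\psi_\pm(-k,x,t)=\overline{\psi_\pm(k,x,t)}$ on $\RR$, which holds because $u$ is real. The inequality $|R(k)|<1$ for $k\neq 0$ is then immediate from unitarity given that $T(k)\neq 0$ away from the origin, a standard fact under \eqref{eq:udecay}. The boundary value $R(0)=-1$ in the exceptional case $|R(0)|=1$ is read off from the $k\to 0$ asymptotics of $W(\psi_-,\psi_+)$ via a zero-energy solution. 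For the regularity and decay of $R$, I would use the Volterra integral representation of $\psi_\pm$ together with the weighted integrability of $u,u_x,u_{xx},u_{xxx}$ from \eqref{eq:udecay}; two $k$-derivatives and three successive integrations by parts in the numerator of the formula for $R_+$ give $R\in C^2(\RR)$ and $R(k)=O(|k|^{-3})$ as $|k|\to\infty$.

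Finally, to derive the time evolution I apply the second Lax operator $B=-4\pa^3+6u\pa+3u_x$ to the Jost solutions and use the Lax equation $\pa_t L=[B,L]$, which is equivalent to \eqref{eq:KdV}. Matching the $x\to\pm\infty$ asymptotics shows that $(\pa_t-B)\psi_\pm$ is a $k$-dependent scalar multiple of $\psi_\pm$, with the multiplier read off from the leading exponential. Substituting into the scattering relation and comparing the coefficients of $\psi_\pm$ and $\overline{\psi_\pm}$ yields $R(k,t)=R(k)e^{8ik^3 t}$; the analogous computation near $k=i\ka_j$, using the time-dependent version of the norming-constant identity above, gives $\ga_j(t)=\ga_j e^{4\ka_j^3 t}$. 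The main technical obstacle is the residue computation, where the norming constant must be tracked carefully through the Wronskian derivative; every other ingredient reduces to a Wronskian manipulation or a standard integration-by-parts argument.
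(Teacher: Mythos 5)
The paper gives no proof of this proposition: it is recalled as standard one-dimensional scattering theory, with the relevant arguments contained in the cited references \cite{ZMNP80} and \cite{GT09} (the latter being where the precise $C^2$ regularity and $O(1/|k|^3)$ decay of $R$ under hypothesis \eqref{eq:udecay} are established). Your outline is exactly that standard argument --- the Wronskian representation $T=2ik/W(\psi_-,\psi_+)$, unitarity and the symmetries from Wronskian identities and the reality of $u$, regularity and decay of $R$ from the Volterra representations of the Jost solutions, time evolution from the Lax pair --- so there is nothing to compare methodologically, and the conclusions you reach all match the statement.

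One step would fail as literally written: in the residue computation you integrate $\pa_x W(\pa_k\psi_-,\psi_+)=2k\psi_-\psi_+$ over $\RR$ and assert that the boundary terms vanish because the $L^2$ eigenfunction decays at both ends. If both boundary terms vanished, the left-hand side would integrate to $0$, contradicting your own (correct) conclusion $\pa_kW|_{k=i\ka_j}=2i\ka_j\mu_j^{-1}\|\psi_+(i\ka_j,\cdot,t)\|_2^2\neq 0$. What actually happens is that $\pa_k\psi_-$ at $k=i\ka_j$ acquires a component along the solution growing like $e^{\ka_jx}$ as $x\to+\infty$: the coefficient of the growing solution in $\psi_-$ is $W(k)/2ik$, which vanishes at the bound state but has nonzero $k$-derivative there, so $W(\pa_k\psi_-,\psi_+)\to\pa_kW(i\ka_j)$ as $x\to+\infty$ while it does vanish as $x\to-\infty$; this nonvanishing boundary term is precisely what produces $\pa_kW$ on the left. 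Equivalently, one writes $\pa_kW=W(\pa_k\psi_-,\psi_+)+W(\psi_-,\pa_k\psi_+)$ and evaluates each Wronskian at the end where it is controlled. With that repair the residue $i\mu_j(t)\ga_j(t)^2$ comes out as you state, and the rest of the plan is sound.
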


The collection $\left(R(k,t), k\geq 0; \ka_1,\ldots,\ka_N,\ga_1(t),\ldots,\ga_N(t)\right)$ is called the scattering data of the Schr\"odinger operator $L(t)$. We encode the scattering data as a contour problem in the following way. Consider the function
\begin{equation}
\chi(k,x,t)=\left\{\begin{array}{cc} T(k) \psi_-(k,x,t)e^{ikx}, & \Im k>0, \\ \psi_+(-k,x,t)e^{ikx}, & \Im k<0.\end{array}\right.\label{eq:chi}
\end{equation}

\begin{proposition} Let $(R(k); \ka_1,\ldots,\ka_N,\ga_1,\ldots,\ga_N)$ be the scattering data of the Schr\"odinger operator $L(0)$. Then the function $\chi(k,x,t)$ defined by \eqref{eq:chi} is the unique function satisfying the following properties: \label{prop:ISM}

\begin{enumerate}

\item $\chi$ is meromorphic on the complex $k$-plane away from the real axis and has non-tangential limits
\begin{equation}
\chi_{\pm}(k,x,t)=\lim_{\varepsilon \to 0}\chi(k\pm i\varepsilon,x,t),\quad k\in \RR
\end{equation}
on the real axis.

\item $\chi$ has a jump on the real axis satisfying
\begin{equation}
\chi_+(k,x,t)-\chi_-(k,x,t)=R(k)e^{2ikx+8ik^3t} \chi_-(-k,x). \label{eq:chijumponrealaxis}
\end{equation}

\item $\chi$ has simple poles at the points $i\ka_1,\ldots,i\ka_n$ and no other singularities. The residues at the poles satisfy the condition
\begin{equation}
\Res_{i\ka_j}\chi(k,x,t)=ic_j e^{-2\ka_jx+8\ka_j^3t}\chi(-i\ka_j,x,t),\quad c_j=\ga_j^2.
\label{eq:residues}
\end{equation}

\item $\chi$ has the asymptotic behavior
\begin{equation}
\chi(k,x,t)=1+\frac{i}{2k}Q(x,t)+O\left(\frac{1}{k^2}\right),\quad |k|\to \infty,\quad \Im k\neq 0.
\label{eq:asymptotic1}
\end{equation}
\end{enumerate}
The function $\chi$ is a solution of the equation
\begin{equation}
\chi''-2ik\chi'-u(x)\chi'=0, 
\end{equation}
and the function $u(x,t)$ given by the formula
\begin{equation}
u(x,t)=\dfrac{d}{dx}Q(x,t)\label{eq:uintermsofQ}
\end{equation}
is a solution of the KdV equation \eqref{eq:KdV} satisfying condition \eqref{eq:udecay}.
\end{proposition}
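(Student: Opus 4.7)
My plan is threefold. First, directly verify that $\chi$ defined by \eqref{eq:chi} satisfies properties (1)--(4). Second, establish uniqueness of the contour problem, which I expect to be the main obstacle. Third, derive the ODE and the KdV equation for the reconstructed potential.

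For the existence step, each property is a direct calculation using the scattering data recorded in the preceding proposition. Property (1) uses only the analyticity and boundary regularity of $T$ and $\psi_\pm$. Property (2) is the translation of the scattering relation $T\psi_-=\overline{\psi_+}+R(k,t)\psi_+$ into a jump relation, using the reality identity $\overline{\psi_+(k,x,t)}=\psi_+(-k,x,t)$ for real $k$ together with $R(k,t)=R(k)e^{8ik^3t}$. Property (3) follows from the pole structure of $T$ at $i\ka_j$ with residue $i\mu_j(t)\ga_j(t)^2$, the proportionality $\psi_+(i\ka_j)=\mu_j\psi_-(i\ka_j)$, and the observation that $\chi(-i\ka_j,x,t)=\psi_+(i\ka_j,x,t)e^{\ka_j x}$ from the lower half-plane branch; the constant $c_j=\ga_j^2$ emerges after absorbing $\ga_j(t)^2=\ga_j^2e^{8\ka_j^3 t}$. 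Property (4) follows by inserting the known expansions of $T$ and $\psi_\pm$ and checking that the $1/k$ coefficients from the two half-planes agree.

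The main obstacle is uniqueness. I would use a vanishing-lemma argument: if $\tilde\chi$ is another solution, the difference $\Delta=\chi-\tilde\chi$ decays at infinity and satisfies a homogeneous problem with the same jump and residue multipliers but coupling $\Delta(k)$ to $\Delta(-k)$. Recasting this as a singular Cauchy integral equation and exploiting the symmetry $R(-k)=\overline{R(k)}$ together with the contractivity $|R(k)|<1$ for $k\neq 0$, a standard $L^2$-energy argument on the contour forces $\Delta\equiv 0$.

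Finally, the function $\phi(k,x,t)=\chi(k,x,t)e^{-ikx}$ equals $T(k)\psi_-$ in the upper half-plane and $\psi_+(-k,\cdot,t)$ in the lower, so it satisfies $-\phi''+u\phi=k^2\phi$ in both regions, which rewrites in terms of $\chi$ as $\chi''-2ik\chi'-u\chi=0$. Substituting the asymptotic \eqref{eq:asymptotic1} and matching the $k^0$ coefficient yields $u=\partial_x Q$. The KdV equation for $u(x,t)$ is then automatic from the linear time dependence $R(k,t)=R(k)e^{8ik^3t}$, $\ga_j(t)=\ga_j e^{4\ka_j^3 t}$ of the scattering data, which is precisely the flow conjugate to the KdV flow under the direct scattering transform.
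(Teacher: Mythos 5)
This proposition is stated in the paper without proof: it is a recollection of the standard reformulation of the inverse scattering method as a contour problem, cited from \cite{ZMNP80} and \cite{GT09}. Your outline is the standard argument from those references and is correct in all essentials: the verification of the jump \eqref{eq:chijumponrealaxis} via the scattering relation and the reality symmetry $\overline{\psi_+(k)}=\psi_+(-k)$, the residue computation absorbing $\mu_j(t)$ into $\ga_j(t)^2$, the vanishing-lemma uniqueness argument resting on $|R(k)|<1$ for $k\neq 0$, and the derivation of $\chi''-2ik\chi'-u\chi=0$ (you have, correctly, $u\chi$ in the last term where the paper's displayed equation has a typographical $u\chi'$). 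The only step I would flag as under-specified is the uniqueness argument: the homogeneous problem couples $\Delta(k)$ to $\Delta(-k)$, so the ``standard $L^2$-energy argument'' must be run on the two-component vector $[\Delta(k)\;\Delta(-k)]$ (cf.\ Remark~\ref{rem:asymmetry}), using positive-definiteness of the symmetrized jump matrix, and must also dispose of the residue conditions at $i\ka_j$; this is all in \cite{GT09} but deserves to be spelled out if the proof is to be self-contained.
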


\begin{remark} We note that the contour problem for $\chi$ is not symmetric with respect to the transformation $k\to -k$. The reflection coefficient $R(k)$ satisfies the symmetry condition \eqref{eq:symmetryofreflection}, however, $\chi$ is required to have poles in the upper $k$-plane and be analytic in the lower $k$-plane. This asymmetry comes from the definition \eqref{eq:Jost} of the Jost functions and is therefore ultimately of physical origin: in the ISM, we consider a quantum-mechanical particle approaching the localized potential from the {\it right}, in other words the method is not symmetric with respect to the transformation $x\to -x$. We will see in the next section that this asymmetry prevents us from directly relating the ISM to the finite-gap method.

It is common (see \cite{GT09}) to instead consider the two-component vector $[\chi(k) \; \chi(-k)]$. The jump condition on the real axis \eqref{eq:chijumponrealaxis} is then replaced by a local Riemann--Hilbert problem. This Riemann--Hilbert problem includes poles on the upper and lower $k$-planes, but the transformation $k\to -k$ merely exchanges the components, which does not fix the asymmetry.

\label{rem:asymmetry}
\end{remark}

\begin{remark} It is possible to relax the constraint $|R(k)|<1$ for $k\neq 0$ and allow $|R(k)|$ to be equal to 1 inside two symmetric finite intervals $v<|k|<u$. In this case, the Riemann--Hilbert problem~\eqref{eq:chijumponrealaxis} is still uniquely solvable and generates a potential of the Schr\"odinger operator and a solution of the KdV equation. However, in this case condition~\eqref{eq:udecay} is not satisfied, and the potential is not rapidly decaying, at least when $x\to -\infty$. This extremely interesting case is completely unexplored. 

\end{remark}

\subsection{$N$-soliton solutions}

We now restrict our attention to the reflectionless case, in other words we assume that $R(k)=0$. In this case, the function $\chi$ has no jump on the real axis and is meromorphic on the entire $k$-plane with simple poles at the points $i\ka_1,\ldots,i\ka_N$. Hence Prop.~\ref{prop:ISM} reduces to the following. 

\begin{proposition} Let $(0; \ka_1,\ldots,\ka_N,\ga_1,\ldots,\ga_N)$ be the scattering data of the Schr\"odinger operator $L(0)$ with zero reflection coefficient. Then the function $\chi(k,x,t)$ defined by \eqref{eq:chi} is the unique function satisfying the following properties:

\begin{enumerate}
\item $\chi$ is meromorphic on the complex $k$-plane with simple poles at the points $i\ka_1,\ldots,i\ka_N$ and no other singularities, and its residues satisfy condition~\eqref{eq:residues}.

\item $\chi$ has the asymptotic behavior \eqref{eq:asymptotic1} as $|k|\to \infty$.
\end{enumerate}
%

\end{proposition}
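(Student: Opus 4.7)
The plan is to obtain this reflectionless version of Proposition~\ref{prop:ISM} as a direct specialization to $R\equiv 0$. For existence, observe that when the reflection coefficient vanishes the jump relation \eqref{eq:chijumponrealaxis} reduces to $\chi_+(k,x,t)=\chi_-(k,x,t)$ for $k\in\RR$. Applied to the $\chi$ furnished by Proposition~\ref{prop:ISM}, Morera's theorem then glues the halves $\Im k>0$ and $\Im k<0$ into a single function meromorphic on all of $\mathbb{C}$, with simple poles only at $i\ka_1,\ldots,i\ka_N$, residues prescribed by \eqref{eq:residues}, and the asymptotic \eqref{eq:asymptotic1} inherited from the original proposition.

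For uniqueness, suppose $\chi$ satisfies the two stated conditions. Liouville's theorem, combined with the prescribed pole structure and the asymptotic $\chi\to 1$, forces the partial-fraction ansatz
\begin{equation*}
\chi(k,x,t)=1+\sum_{j=1}^{N}\frac{A_j(x,t)}{k-i\ka_j}
\end{equation*}
for coefficients $A_j(x,t)$ to be determined. Substituting this expression into the residue condition \eqref{eq:residues}, evaluating the right-hand side at $k=-i\ka_j$, and rearranging produces an $N\times N$ linear system $(I+M(x,t))A(x,t)=b(x,t)$, with $M_{j\ell}=c_j e^{-2\ka_j x+8\ka_j^3 t}/(\ka_j+\ka_\ell)$ and $b_j=ic_j e^{-2\ka_j x+8\ka_j^3 t}$.

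The main (and only nontrivial) obstacle is the invertibility of $I+M(x,t)$ for every $(x,t)\in\RR^2$. My plan is to conjugate $M$ by the positive diagonal matrix $D=\operatorname{diag}(\sqrt{c_j e^{-2\ka_j x+8\ka_j^3 t}})$, producing the symmetric Cauchy-type matrix with entries $\sqrt{c_i c_j}\,e^{-(\ka_i+\ka_j)x+4(\ka_i^3+\ka_j^3)t}/(\ka_i+\ka_j)$. A routine integral identity recognizes this as the Gram matrix of the linearly independent exponential functions $y\mapsto\sqrt{c_j}\,e^{-\ka_j(x+y)+4\ka_j^3 t}$ in $L^2(0,\infty)$, hence positive definite. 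Therefore $I+M$ has spectrum bounded below by $1$, so the linear system has a unique solution and $\chi$ is uniquely determined; it must then coincide with the function produced in the existence step and thereby with \eqref{eq:chi}.
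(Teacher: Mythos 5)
Your argument is correct. Note, however, that the paper offers no proof of this proposition at all: it is stated as an immediate reduction of Proposition~\ref{prop:ISM} to the case $R\equiv 0$ (the jump \eqref{eq:chijumponrealaxis} disappears, so $\chi$ continues across the real axis), and the unique solvability of the resulting residue system is dismissed as ``a linear algebra exercise.'' Your existence step is exactly this specialization, so there you coincide with the paper. Where you genuinely add content is in the uniqueness step: the Liouville reduction to the partial-fraction ansatz and, above all, the proof that $I+M(x,t)$ is invertible. Your route --- conjugating by the diagonal matrix $D$ to get a symmetric Cauchy-type matrix and recognizing it as the Gram matrix of the independent exponentials $\sqrt{c_j}\,e^{-\ka_j(x+y)+4\ka_j^3t}$ in $L^2(0,\infty)$, hence positive definite --- is clean and has the advantage of surviving the continuum limit $N\to\infty$ that the paper takes later (the operator in \eqref{eq:Ninfinityoneside} is a perturbation of the identity by a positive integral operator for the same reason). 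The paper's implicit justification is different: the explicit Cauchy-determinant expansion \eqref{eq:NsolitonA} exhibits the determinant $A$ as $1$ plus a sum of manifestly positive terms, so $A\geq 1>0$; that computation gives more (the closed form \eqref{eq:Nsoliton} for $u$) but requires the Cauchy determinant identity, whereas your Gram-matrix argument proves only invertibility but does so with less machinery. One small bookkeeping point: with $A_j$ equal to the residue of $\chi$ at $i\ka_j$, the right-hand side of the linear system carries the factor $i$ as you have it, i.e.\ $b_j=ic_je^{-2\ka_jx+8\ka_j^3t}$; the paper's \eqref{eq:chiresidues} drops this factor, which amounts to a harmless renormalization $\chi_n=-iA_n$ (or a typo) and does not affect either argument.
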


The corresponding solution $u(x,t)$ of the KdV equation \eqref{eq:KdV}, given by formula~\eqref{eq:uintermsofQ}, is known as the $N$-soliton solution. Finding this solution is a linear algebra exercise. If $\chi$ is expressed in terms of its residues
\begin{equation}
\chi=1+\sum_{n=1}^N\frac{\chi_n}{k-i\ka_n},
\end{equation}
then plugging this into equation~\eqref{eq:residues} gives a linear equation
\begin{equation}
\chi_n+c_ne^{-2\ka_nx+8\ka_n^3t}\sum_{m=1}^N\frac{\chi_m}{\ka_n+\ka_m}=c_ne^{-2\ka_nx+8\ka_n^3t}. \label{eq:chiresidues}
\end{equation}
Let $A$ be the determinant of this system:
\begin{equation}
A=\sum_{I\subset \{1,\ldots,N\}}\prod_{(i,j)\subset I,\,i<j} \frac{(\ka_i-\ka_j)^2}{(\ka_i+\ka_j)^2}\prod_{i\in I}q_i e^{-2\ka_i x+8\ka_i^3t},\quad q_i=\frac{c_i}{2\ka_i}>0. \label{eq:NsolitonA}
\end{equation}
Then the corresponding $N$-soliton solution of the KdV equation~\eqref{eq:KdV} is
\begin{equation}
u(x,t)=-2\frac{d^2}{dx^2} \log A.\label{eq:Nsoliton}
\end{equation}

\subsection{The na\"ive limit $N\to \infty$} The papers \cite{DZZ16}, \cite{ZDZ16}, \cite{ZZD16} were motivated by the following question. There exists a family of solutions of the KdV equation, called the finite-gap solutions, that are parametrized by the data of a hyperelliptic algebraic curve with real branch points and a line bundle on it. The solutions are given by the Matveev--Its formula
\begin{equation}
u(x,t)=-2\frac{d^2}{d x^2}\ln \Theta(Ux+Vt+Z|B), \label{eq:MatveevIts}
\end{equation}
where $\Theta(\cdot|B)$ is the Riemann theta function of the hyperelliptic curve, and $U$, $V$, and $Z$ are certain vectors. The solution $u(x,t)$ is quasiperiodic in $x$ and in $t$. It is well-known that the $N$-soliton solutions of the KdV equation \eqref{eq:Nsoliton} can be obtained from the Matveev--Its formula by degenerating the hyperelliptic spectral curve to a rational curve with $N$ branch points. Is it possible, conversely, to obtain the Matveev--Its formula \eqref{eq:MatveevIts} as some kind of limit of $N$-soliton solutions \eqref{eq:Nsoliton} when $N\to \infty$?

We may attempt to na\"ively pass to the limit $N\to \infty$ in \eqref{eq:Nsoliton} in the following way. Let $[a,b]$ be an interval on the positive real axis, let $R_1$ be a positive H\"older-continuous function on $[a,b]$, and let $\mu$ be a non-negative measure on $[a,b]$. Consider the following integral equation 
\begin{equation}
f(p,x,t)+\frac{R_1(p)}{\pi}e^{-2px+8p^3t}\int_a^b\frac{f(q,x,t)}{p+q}d\mu(q)=R_1(p)e^{-2px+8p^3t}\label{eq:Ninfinityoneside}
\end{equation}
imposed on a function $f(p,x,t)$, where $p\in [a,b]$. Let $a=\ka_1<\ka_2<\cdots<\ka_N=b$ be a partition of $[a,b]$ uniformly approximating $\mu$. Replacing the above integral with the corresponding Riemann sum, and denoting $c_n=R_1(\ka_n)(b-a)/\pi N$ and $\chi_n=f(\ka_n)(b-a)/\pi N$, we obtain equation \eqref{eq:chiresidues}. Hence equation \eqref{eq:Ninfinityoneside} can be seen as the  limit of \eqref{eq:chiresidues} as $N\to\infty$.

It is easy to show that \eqref{eq:Ninfinityoneside} has a unique solution, and that the corresponding function
\begin{equation}
u(x,t)=-2\frac{d}{dx}\int_a^b f(p,x,t)d\mu(p)
\end{equation}
is a bounded solution of the KdV equation, satisfying the condition $-2b<u<0$. The solution is oscillating as $x\to -\infty$, but as $x\to +\infty$ it is clear that $f(p,x,t)\to R(k)e^{-2kx+8k^3t}$, hence $u(x,t)$ decays exponentially. In other words, $u(x,t)$ can be viewed as a superposition of an infinite number of solitons uniformly bounded away from $+\infty$. In particular, no solution obtained in this way will be an even function of $x$ at any moment of time. It is therefore impossible to obtain the finite-gap solutions given by the Matveev--Its formula \eqref{eq:MatveevIts} in this way, since these solutions are not decreasing as $x\to +\infty$. This lack of symmetry is due to the formulation of the ISM (see Remark~\ref{rem:asymmetry}). These observations were earlier made by Krichever \cite{Krichever}, and a rigorous study of the properties of such solutions, showing the above results, was undertaken by Girotti, Grava and McLaughlin in \cite{GGM18}.

\subsection{Symmetric $N$-soliton solutions} \label{sec:symmetric} In this section, we consider what happens if we try to impose by hand symmetry with respect to the spatial involution $x\mapsto -x$ at $t=0$. We recall than an $N$-soliton solution of the KdV equation \eqref{eq:Nsoliton} is determined by $N$ distinct positive parameters $\ka_1,\ldots,\ka_N$ and $N$ additional positive parameters $q_1,\ldots,q_N$.

\begin{proposition} Let $\ka_1,\ldots,\ka_N$ be distinct positive numbers, and let
\begin{equation}
q_n=\left|\prod_{m\neq n}\frac{\ka_n+\ka_m}{\ka_n-\ka_m}\right|,\quad n=1,\ldots,N. \label{eq:qsymmetric}
\end{equation}
Then the $N$-soliton solution $u(x,t)$ of the KdV equation given by \eqref{eq:Nsoliton} is symmetric at time $t=0$:
\begin{equation}
u(-x,0)=u(x,0).
\end{equation}

\end{proposition}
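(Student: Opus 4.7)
The plan is to prove the stronger statement that
\begin{equation*}
A(-x,0)=e^{2Sx}A(x,0),\qquad S=\ka_1+\cdots+\ka_N.
\end{equation*}
Once this is in hand, $\log A(-x,0)-\log A(x,0)=2Sx$ is linear in $x$, so its second derivative vanishes, and hence
\begin{equation*}
u(-x,0)=-2\frac{d^2}{dx^2}\log A(x,0)\Big|_{x\mapsto -x}=-2\frac{d^2}{dx^2}\log A(x,0)=u(x,0).
\end{equation*}

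To establish the identity, I would compare the two sides coefficient by coefficient as polynomials in the exponentials. Writing $I=\{1,\dots,N\}\setminus J$, in $A(-x,0)$ the subset $I$ contributes $e^{2(\sum_{i\in I}\ka_i)x}=e^{2Sx}e^{-2(\sum_{j\in J}\ka_j)x}$, so after pulling out $e^{2Sx}$ the coefficient of $e^{-2(\sum_{j\in J}\ka_j)x}$ in $A(-x,0)/e^{2Sx}$ is indexed by the complementary subset $I$:
\begin{equation*}
C^{(-)}_{J}=\prod_{i<j,\;i,j\in I}\frac{(\ka_i-\ka_j)^2}{(\ka_i+\ka_j)^2}\prod_{i\in I}q_i,
\end{equation*}
while the coefficient of the same exponential in $A(x,0)$ is
\begin{equation*}
C^{(+)}_{J}=\prod_{i<j,\;i,j\in J}\frac{(\ka_i-\ka_j)^2}{(\ka_i+\ka_j)^2}\prod_{j\in J}q_j.
\end{equation*}

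The main algebraic input is the following observation: substituting the specific $q_n$ from \eqref{eq:qsymmetric}, pairs $m,n$ lying inside the same subset $K$ contribute a factor $(\ka_n+\ka_m)^2/(\ka_n-\ka_m)^2$ to $\prod_{n\in K}q_n$ (two ordered pairs collapse to one squared factor), while cross pairs $n\in K$, $m\notin K$ contribute a single factor $(\ka_n+\ka_m)/|\ka_n-\ka_m|$. Thus the within-$K$ factors exactly cancel the prefactor $\prod_{i<j,\;i,j\in K}(\ka_i-\ka_j)^2/(\ka_i+\ka_j)^2$, leaving
\begin{equation*}
\prod_{i<j,\;i,j\in K}\frac{(\ka_i-\ka_j)^2}{(\ka_i+\ka_j)^2}\prod_{n\in K}q_n=\prod_{n\in K,\,m\notin K}\frac{\ka_n+\ka_m}{|\ka_n-\ka_m|}.
\end{equation*}
Applying this with $K=I$ and with $K=J$ shows that both $C^{(-)}_J$ and $C^{(+)}_J$ equal the product over unordered pairs with one element in $I$ and one in $J$, which is visibly symmetric in the roles of $I$ and $J$. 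This proves $A(-x,0)=e^{2Sx}A(x,0)$.

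The only real obstacle is the bookkeeping: tracking how pairs distribute between $I$, $J$, and the cross $I\times J$ when expanding $\prod_{n\in I}q_n$, and being careful with the absolute values (which are harmless because the $\ka_n$ are distinct and positive, and the within-subset factors appear squared). I would also note explicitly that the argument really requires $t=0$: at general time the factor $e^{8\ka_i^3 t}$ in each summand is not invariant under $x\mapsto -x$, and the resulting time-dependent exponentials in $A(-x,t)$ versus $A(x,t)$ carry $\sum_{i\in I}\ka_i^3$ against $\sum_{j\in J}\ka_j^3$, which are unequal, so no analogous factorization can hold.
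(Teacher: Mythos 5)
Your proposal is correct and follows essentially the same route as the paper: the paper's proof consists precisely of the observation that $e^{\Phi x}A(x)$ with $\Phi=\ka_1+\cdots+\ka_N$ is an even function of $x$ (equivalently, your identity $A(-x,0)=e^{2\Phi x}A(x,0)$), from which the symmetry of $u=-2(\log A)''$ follows immediately. You simply supply the subset--complement bookkeeping that the paper leaves to the reader, and your verification of it is sound.
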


\begin{proof} At time $t=0$, the function $A(x)=A(x,t)$ is equal to
$$
A(x)=1+q_1e^{-2\ka_1x}+\cdots+q_Ne^{-2\ka_Nx}+\cdots+(q_1\cdots q_N)\prod_{i<j}\frac{(\ka_i-\ka_j)^2}{(\ka_i+\ka_j)^2}e^{-2(\ka_1+\cdots+\ka_N)x}.
$$
Denote $\Phi=\ka_1+\cdots+\ka_N$. We observe that the function $\widetilde{A}(x)=e^{\Phi x}A(x)$ is symmetric: $\widetilde{A}(-x)=\widetilde{A}(x)$. Therefore, so is the corresponding solution of the KdV equation:
$$
u=-2\frac{d^2}{dx^2} \log A=-\frac{d^2}{dx^2} \log \widetilde{A}.
$$

\end{proof}

We now observe that if we attempt to pass to the limit $N\to \infty$, for example by setting $\ka_n=a+(b-a)n/N$, then the coefficients $q_n$ given by \eqref{eq:qsymmetric} have small denominators and diverge. Therefore we cannot obtain finite-gap solutions by this method.

\subsection{From the ISM to the dressing method.} \label{sec:dressing} One of the main results of the papers \cite{DZZ16}, \cite{ZDZ16}, \cite{ZZD16} is a generalization of the ISM within the framework of the dressing method. This construction allows us to take the $N\to \infty$ limit of the set of $N$-soliton solutions and obtain finite-gap solutions. We briefly describe this generalization. 

An $N$-soliton solution is given by Eqs.~\eqref{eq:NsolitonA}-\eqref{eq:Nsoliton}, where the $c_i$ and the $\ka_i$ are the scattering data of a reflectionless potential and are therefore positive. However, formally these equations make sense under the weaker assumption that $\ka_i+\ka_j\neq 0$ for all $i$ and $j$ and that $c_i/\ka_i$ are positive. The corresponding function $\chi$ has poles on both the positive and the negative parts of the imaginary axis.

\begin{proposition} Let $\ka_1,\ldots,\ka_N,c_1,\ldots,c_N$ be nonzero real numbers satisfying the following conditions:  \label{prop:ISMcorrect}
\begin{enumerate}
\item $\ka_i\neq \pm \ka_j$ for $i\neq j$.

\item $c_j/\ka_j>0$ for all $j$. 
\end{enumerate}
Then there exists a unique function $\chi(k,x,t)$ satisfying the following properties:
\begin{enumerate}

\item $\chi$ is meromorphic on the complex $k$-plane with simple poles at the points $i\ka_1,\ldots,i\ka_N$ and no other singularities, and its residues satisfy condition \eqref{eq:residues}. 

\item $\chi$ has the asymptotic behavior \eqref{eq:asymptotic1} as $|k|\to \infty$.

\end{enumerate}

The function $u(x,t)$ given by Eqs.~\eqref{eq:NsolitonA}-\eqref{eq:Nsoliton} is a solution of the KdV equation \eqref{eq:KdV}.
\end{proposition}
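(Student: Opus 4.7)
My plan is to follow the algebraic strategy already used for the standard $N$-soliton solutions, and to verify that every step survives under the weaker positivity hypothesis $c_j/\ka_j>0$. Conditions (1) and (2) in the statement force the ansatz
$$
\chi(k,x,t) = 1 + \sum_{n=1}^{N}\frac{\chi_n(x,t)}{k-i\ka_n},
$$
and the residue condition \eqref{eq:residues} then translates, exactly as in the reflectionless case, into the linear system \eqref{eq:chiresidues} on the residues $\chi_n(x,t)$. So existence and uniqueness of $\chi$ reduces to showing that this linear system is nonsingular at every $(x,t)$, and this is where the proof has real content.

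The determinant of this linear system is precisely the function $A(x,t)$ written out in \eqref{eq:NsolitonA}; the derivation of that expansion is a purely algebraic expansion of a rational determinant and makes no use of the sign of the $\ka_j$. Under our hypotheses, every summand in \eqref{eq:NsolitonA} is nonnegative: the factors $(\ka_i-\ka_j)^2/(\ka_i+\ka_j)^2$ are well-defined and nonnegative by (1), the exponentials are always positive, and the factors $q_i = c_i/(2\ka_i)$ are positive by (2). Since the empty subset $I=\emptyset$ contributes $+1$, the sum is strictly positive, so $A(x,t)>0$ everywhere. This establishes existence and uniqueness of $\chi$.

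To recover $u(x,t)$, I extract the $1/k$ coefficient of the asymptotic expansion \eqref{eq:asymptotic1}: the ansatz above gives $\chi = 1 + k^{-1}\sum_n \chi_n + O(k^{-2})$, hence $Q = -2i\sum_n \chi_n$. The standard Cramer's rule manipulation, familiar from the classical $N$-soliton derivation and again insensitive to the sign of the $\ka_j$, identifies $-2i\sum_n \chi_n$ with $-2\partial_x \log A$, so $u = \partial_x Q = -2\partial_x^2 \log A$, which is \eqref{eq:Nsoliton}. Finally, the KdV equation \eqref{eq:KdV} itself then follows from the Hirota bilinear identity satisfied by expressions of the form \eqref{eq:NsolitonA}; this identity is an algebraic relation in the $\ka_j$ and holds without any sign assumption, provided $A$ does not vanish.

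The one genuinely delicate point in this outline is therefore the positivity of $A$, and this is precisely what the hypothesis $c_j/\ka_j>0$ is engineered to supply: without it, $A$ could vanish on a hypersurface in $(x,t)$-space, the linear system would become singular there, and $u$ would develop poles. I expect this positivity observation to carry essentially all the weight of the proof; the remaining steps are a transcription of the classical $N$-soliton calculation, with only minor bookkeeping to account for the possibly negative $\ka_j$.
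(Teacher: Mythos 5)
Your argument is correct, and its load-bearing step---that every term of the determinant expansion \eqref{eq:NsolitonA} is nonnegative because the Cauchy factors are squares and $q_i=c_i/2\ka_i>0$ by hypothesis (2), so $A\geq 1$ and the linear system \eqref{eq:chiresidues} is nonsingular for all $(x,t)$---is precisely the observation the paper makes informally when it says the formulas ``make sense'' under the weakened hypotheses; the paper gives no formal proof of this proposition. Where you genuinely diverge is in the last step, proving that $u=-2\partial_x^2\log A$ solves \eqref{eq:KdV}: you appeal to the Hirota bilinear identity for the tau function \eqref{eq:NsolitonA}, which is indeed an algebraic identity in the $\ka_j$ insensitive to their signs, valid wherever $A\neq 0$. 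The paper instead justifies the KdV property by the explicit reparametrization \eqref{eq:transplant}, which shows that the potential built from $(\ka_1,\ldots,\ka_N,c_1,\ldots,c_N)$ is \emph{literally} the classical $N$-soliton potential with genuine scattering data $(|\ka_1|,\ldots,|\ka_N|,\widetilde{c}_1,\ldots,\widetilde{c}_N)$, so that the KdV property is inherited from the reflectionless case. Your route is more self-contained (nothing to verify beyond the standard bilinear identity), while the paper's route buys more: it identifies exactly which $N$-soliton solution one obtains and exhibits the $2^N$-fold redundancy of the parametrization, which is what the authors exploit immediately afterwards to argue that symmetric soliton configurations can be reached with bounded coefficients $c_n$. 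Both arguments establish the proposition.
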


We emphasize that, for a given $N$, the set of solutions of the KdV equation obtained using this proposition is still the set of $N$-soliton solutions. Specifically, one can check that the solution given by \eqref{eq:NsolitonA}-\eqref{eq:Nsoliton} for the data $(\ka_1,\ldots,\ka_N,c_1,\ldots,c_N)$ is the $N$-solition solution given by the scattering data $(|\ka_1|,\ldots,|\ka_N|,\widetilde{c}_1,\ldots,\widetilde{c}_N)$, where
\begin{equation}
\widetilde{c}_j=c_j\prod_{\ka_n<0}\left(\frac{\ka_j-\ka_n}{\ka_j+\ka_n}\right)^2\mbox{ if }\ka_j>0,\quad
\widetilde{c}_j=-\frac{4\ka_j^2}{c_j}\prod_{\ka_n<0,\,n\neq j}\left(\frac{\ka_j-\ka_n}{\ka_j+\ka_n}\right)^2\mbox{ if }\ka_j<0.\label{eq:transplant}
\end{equation}
In other words, a $N$-soliton solution with a given set of parameters $\ka_n>0$ and phases $c_n>0$ is described by Prop.~\ref{prop:ISMcorrect} in $2^N$ different ways, by choosing the signs of the $\ka_n$ arbitrarily and adjusting the coefficients $c_n$ using the above formula.

We now give an informal argument why this alternative description of $N$-soliton potentials allows us to obtain finite-gap potentials in the $N\to\infty$ limit. In the previous two sections, we made two attempts to use formulas~\eqref{eq:NsolitonA}-\eqref{eq:Nsoliton} with $\ka_n>0$ to produce $N$-soliton solutions with large $N$. We can either keep the $q_n$ bounded, in which case all solitons end up on the left half-axis, or symmetrically distribute the solitons about $x=0$, in which case the $q_n$ (or, alternatively, the $c_n$) need to be large. 

To obtain a symmetric distribution of $N$ solitons using Proposition~\ref{prop:ISMcorrect}, we choose, as in Section~\ref{sec:symmetric}, a set of parameteres $\ka_n>0$, and set the phases $q_n$ according to~\eqref{eq:qsymmetric}. We then change the signs of half of the $\ka_n$, and change the $c_n$ according to Eq.~\eqref{eq:transplant}. The resulting $c_n$ will be bounded for large $N$, enabling us to take the $N\to \infty$ limit. 


\subsection{Primitive potentials}



In the papers \cite{DZZ16,ZDZ16,ZZD16} the second and third authors considered a contour problem that can be viewed as the limit of Prop.~\ref{prop:ISMcorrect} as $N\to\infty$.

\begin{proposition} Let $0<k_1<k_2$, and let $R_1$ and $R_2$ be positive, H\"older-continuous functions on the interval $[k_1,k_2]$. Suppose that there exists a unique function $\chi(k,x,t)$ satisfying the following properties:

\begin{enumerate} \item $\chi$ is analytic on the complex $k$-plane away from the cuts $[ia,ib]$ and $[-ib,-ia]$ on the imaginary axis, and has non-tangential limits
\begin{equation}
\chi^{\pm}(ip,x,t)=\lim_{\varepsilon\to 0}\chi(ip\pm\varepsilon,x,t),\quad p\in (-k_2,-k_1)\cup(k_1,k_2)\label{eq:limits}
\end{equation}
on the cuts.

\item $\chi$ has jumps on the cuts satisfying
\begin{align}
& \chi^+(ip,x,t)-\chi^-(ip,x,t)=iR_1(p)e^{-2px+8p^3t}[\chi^+(-ip,x,t)+\chi^-(-ip,x,t)], \label{eq:jumps1} \\
&\chi^+(-ip,x,t)-\chi^-(-ip,x,t)=-iR_2(p)e^{2px-8p^3t}[\chi^+(ip,x,t)+\chi^-(ip,x,t)]\label{eq:jumps2}, 
\end{align}
for $p\in[k_1,k_2]$.

\item $\chi$ has asymptotic behavior at infinity
\begin{equation}
\chi(k,x,t)=1+\frac{i}{2k}Q(x,t)+O\left(\frac{1}{k^2}\right),\quad |k|\to \infty,\quad \Im k\neq 0. \label{eq:asymptotic2}
\end{equation}

\item There exist constants $C(x,t)$ and $\alpha<1$ such that near the points $\pm ik_1$ and $\pm ik_2$ the function $\chi$ satisfies
\begin{equation}
|\chi(k,x,t)|< \frac{C(x,t)}{|k\mp ik_j|^{\alpha}},\quad k\to \pm ik_j,\quad j=1,2.
\label{eq:conditionatpoles}
\end{equation}

\end{enumerate}
Then the function $u(x,t)$ given by the formula
\begin{equation}
u(x,t)=\dfrac{d}{dx}Q(x,t) \label{eq:solutionofKdV}
\end{equation}
is a solution of the KdV equation \eqref{eq:KdV}.

\end{proposition}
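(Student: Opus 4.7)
The plan is to apply the uniqueness statement of the RH problem twice as a rigidity tool: first to derive that $\chi$ satisfies a Schr\"odinger-type ODE in $x$, and then to derive a linear PDE of order three in $x$ expressing $\chi_t$; combining the two relations via the Lax pair compatibility argument then yields KdV. The template for both applications is the same: define an auxiliary function $F$ from $\chi$, verify by direct computation that $F$ satisfies the same jump relations \eqref{eq:jumps1}--\eqref{eq:jumps2}, the same endpoint bound \eqref{eq:conditionatpoles}, and decays at infinity; then uniqueness, applied to the difference of two candidate solutions, forces $F\equiv 0$.

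For the spatial equation, set $F = \chi_{xx} - 2ik\chi_x - u\chi$ with $u = Q_x$. The choice $u = Q_x$ is forced by requiring the constant-in-$k$ coefficient of $F$ at infinity to vanish. The jump identities are verified using $\partial_x^n(J_1 h) = J_1\sum_{j}\binom{n}{j}(-2p)^{n-j}\partial_x^j h$, where $J_1 = iR_1 e^{-2px+8p^3t}$ is the upper-cut jump factor and $h = \chi^+(-ip)+\chi^-(-ip)$; the target identity $F^+(ip)-F^-(ip) = J_1[F^+(-ip)+F^-(-ip)]$ follows after cancellations that use the specific coefficient $-2ik$ in $F$, which at $k = \pm ip$ becomes $\pm 2p$. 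The endpoint bound is preserved since $k$ stays bounded near $\pm ik_j$. The conclusion $F\equiv 0$ is equivalent to $-\psi_{xx}+u\psi = k^2\psi$ for $\psi = \chi e^{-ikx}$.

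For the temporal equation, set $G = \chi_t - \widetilde M_k \chi$, where $\widetilde M_k\chi = -4\chi_{xxx} + 12ik\chi_{xx} + (12k^2+6u)\chi_x + (3u_x - 6iku)\chi$; this operator arises from substituting $\psi = \chi e^{-ikx}$ into $(M + 4ik^3)\psi$ with $M = -4\partial_x^3 + 6u\partial_x + 3u_x$ the standard KdV Lax operator. The constant $4ik^3$ is forced by matching the extra contribution $8p^3 J_1 h$ produced by $\partial_t J_1 = 8p^3 J_1$ in the jump identity: the requirement reduces to a constant $c(k)$ satisfying $c(ip)-c(-ip)=8p^3$, and $c(k)=4ik^3$ is the natural polynomial choice. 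Verifying $G = O(1/k)$ at infinity requires the recursion $\chi_{2,x} = (Q_{xx}-QQ_x)/4$ and its analogue for $\chi_3$, both obtained from expanding $F\equiv 0$ in powers of $1/k$; with these in hand, the $k^3, k^2, k^1, k^0$ coefficients of $G$ all cancel. Uniqueness yields $G\equiv 0$, equivalently $\psi_t = (M+4ik^3)\psi$.

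Differentiating $L\psi = k^2\psi$ (with $L = -\partial_x^2 + u$) in $t$ and substituting $\psi_t = (M+4ik^3)\psi$ yields $u_t\psi = [M,L]\psi$; the direct commutator computation gives $[M,L] = 6uu_x - u_{xxx}$, and since $\psi$ is nontrivial for generic $k$ this produces \eqref{eq:KdV}. The main obstacle is the temporal step: identifying the correct correction $c(k)=4ik^3$ and then verifying the asymptotic cancellations through order $k^0$ that make $G$ decay at infinity; these cancellations depend on the recursive determination of the higher coefficients $\chi_2$ and $\chi_3$ of $\chi$ coming from the spatial equation.
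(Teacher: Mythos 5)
The paper does not prove this proposition at all: it is recalled from \cite{DZZ16,ZDZ16,ZZD16} and stated without argument, so there is no in-paper proof to compare yours against. Your strategy is the standard dressing-method/Lax-pair rigidity argument, and the computational core checks out. The jump factor $J_1=iR_1e^{-2px+8p^3t}$ satisfies $\partial_xJ_1=-2pJ_1$, and at $k=\pm ip$ the coefficient $-2ik$ in $F=\chi_{xx}-2ik\chi_x-u\chi$ becomes $\pm 2p$, so the quadratic terms cancel and $F$ satisfies the same jump relations \eqref{eq:jumps1}--\eqref{eq:jumps2} as $\chi$; the choice $u=Q_x$ kills the $O(1)$ term of $F$ at infinity, so uniqueness gives $F\equiv 0$. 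For the time flow, $c(ip)-c(-ip)=8p^3$ is indeed met by $c(k)=4ik^3$, which is moreover \emph{forced} (not merely natural) because it must also cancel the $-4ik^3$ produced by conjugating $-4\partial_x^3$ by $e^{-ikx}$ so that $G\to 0$; and the order-$k^0$ coefficient of $G$ vanishes once $\chi_{2,x}=(Q_{xx}-QQ_x)/4$ is inserted. (You do not actually need $\chi_3$: since only $G\to 0$ is required, cancellation through order $k^0$ suffices.) The commutator identity $[M,L]=6uu_x-u_{xxx}$ and the final compatibility step are correct and reproduce \eqref{eq:KdV}.

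Two points are dispatched too quickly. First, condition \eqref{eq:conditionatpoles}: it is not enough that $k$ stays bounded near $\pm ik_j$; for $F$ and $G$ to lie in the class where the assumed uniqueness applies, you must show that $\chi_x$, $\chi_{xx}$, $\chi_{xxx}$ and $\chi_t$ themselves obey bounds of the form $C(x,t)\,|k\mp ik_j|^{-\alpha}$ with $\alpha<1$. This does follow from the spectral representation of $\chi$ in terms of the densities $f,g$ (the $x$- and $t$-derivatives fall only on smooth exponential factors), but it has to be said, since without it the auxiliary functions could pick up inadmissible endpoint singularities and the uniqueness hypothesis would not apply to $\chi+F$ or $\chi+G$. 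Second, the hypothesis \eqref{eq:asymptotic2} provides an expansion only to order $k^{-2}$ and says nothing about differentiating the error term in $x$ or $t$, nor about the existence of the coefficient $\chi_2$ that your order-$k^0$ cancellation uses; again the integral representation supplies this, but it is a genuine ingredient, not a formality. Neither issue is a wrong turn, but both must be discharged before uniqueness can be invoked.
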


We call solutions of the KdV equation obtained in this way {\it primitive solutions}. For fixed moments of time, we obtain {\it primitive potentials} of the Schr\"odinger operator \eqref{eq:Sch}.

\begin{remark} Condition \eqref{eq:conditionatpoles} does not appear in the papers \cite{DZZ16,ZDZ16,ZZD16} and is an oversight of the authors. It is necessary, because we consider dressing functions $R_1$ and $R_2$ that do not vanish at $k_1$ and $k_2$. For such functions $\chi$ may have logarithmic or algebraic singularities at the endpoints. Condition \eqref{eq:conditionatpoles} is needed to exclude trivial meromorphic solutions of the Riemann--Hilbert problem, having poles at $\pm ik_j$ and no jump on the cuts. 

We also note that formulas \eqref{eq:jumps1}-\eqref{eq:jumps2} differ from the ones in \cite{DZZ16,ZDZ16,ZZD16} by a factor of $\pi$, this now seems to us to be a more natural normalization of the dressing functions $R_1$ and $R_2$. 

\end{remark}

\begin{remark} 

There is a simple observation that justifies the need to include poles in both the upper and lower half planes when producing a finite gap potential as a limit of $N$-soliton potentials as $N \to \infty$. The spectrum of an $N$-soliton potential determined by $\{\ka_n,c_n\}_{n=1}^N$ is purely simple for the negative energy values $E = -\ka_n^2$, and doubly degenerate for $E>0$. Therefore, a limit as $N\to\infty$ of $N$-soliton solutions with poles in the upper half-plane will have a simple spectrum $E \in [-k_2^2,-k_1^2]$ (in the one band case) and a doubly degenerate spectrum for $E>0$. This is precisely the structure of the spectrum of a one-sided primitive potential having $R_2 \equiv 0$, which limits to a finite gap solution as $x \to - \infty$, but a trivial solution as $x \to \infty$.

A finite-gap potential, on the other hand, has a doubly degenerate continuous spectrum on the interior of its bands, and a simple continuous spectrum on the band ends. To produce a finite-gap potential as a limit of $N$-soliton potentials as $N \to \infty$, we need to include poles in both half-planes, so that in the limit we end up with two linearly independent bounded wave functions for $E$ in the interior of a band.

\end{remark}

A function $\chi(k,x,t)$ satisfying properties \eqref{eq:limits}-\eqref{eq:asymptotic2} can be written in the form
\begin{equation}
\chi(k,x,t)=1+\frac{i}{\pi}\int_{k_1}^{k_2} \frac{f(q,x,t)}{k-iq}dq+\frac{i}{\pi}\int_{k_1}^{k_2}\frac{g(q,x,t)}{k+iq}dq,
\end{equation}
for some functions $f(q,x,t)$ and $g(q,x,t)$ defined for $q\in [a,b]$. Plugging this spectral representation into \eqref{eq:jumps1}-\eqref{eq:jumps2}, we obtain the following system of singular integral equations on $f$ and $g$ for $p\in [k_1,k_2]$:
\begin{align}
f(p,x,t)+\frac{R_1(p)}{\pi}e^{-2px+8p^3t}\left[\int_{k_1}^{k_2} \frac{f(q,x,t)}{p+q}dq+\fint_{k_1}^{k_2}\frac{g(q,x,t)}{p-q}dq\right]=R_1(p)e^{-2px+8p^3t}, \label{eq:f} \\
g(p,x,t)+\frac{R_2(p)}{\pi}e^{2px-8p^3t}\left[\fint_{k_1}^{k_2} \frac{f(q,x,t)}{p-q}dq+\int_{k_1}^{k_2}\frac{g(q,x,t)}{p+q}dq\right]=-R_2(p)e^{2px-8p^3t}. \label{eq:g}
\end{align}
The corresponding solution of the KdV equation is equal to
\begin{equation}
u(x,t)=\frac{2}{\pi}\frac{d}{dx}\int_{k_1}^{k_2}\left[f(q,x,t)+g(q,x,t)\right]dq.
\end{equation}









\section{Symmetric primitive potentials}

In this section, we show how to solve equations \eqref{eq:f}-\eqref{eq:g} analytically as Taylor series in the case when $R_1=R_2$. Suppose that \begin{equation}
R_1(p)=R_2(p)=R(p).
\end{equation}
In this case $g(p,x,t)=-f(p,-x,-t)$ and Eqs.~\eqref{eq:f}-\eqref{eq:g} reduce to the single equation for all $p\in [k_1,k_2]$:
\begin{equation}
f(p,x,t)+\frac{R(p)}{\pi}e^{-2px+8p^3t}\left[\int_{k_1}^{k_2} \frac{f(q,x,t)}{p+q}dq-\fint_{k_1}^{k_2}\frac{f(q,-x,-t)}{p-q}dq\right]=R(p)e^{-2px+8p^3t}.\label{eq:fsym}
\end{equation}
The corresponding primitive solution $u(x,t)$ of the KdV equation
\begin{equation}
u(x,t)=\frac{2}{\pi}\frac{d}{dx}\int_{k_1}^{k_2}\left[f(q,x,t)-f(q,-x,-t)\right]dq
\end{equation}
satisfies the symmetry condition
\begin{equation}
u(-x,-t)=u(x,t).
\end{equation}
In particular, the potential $u(x)=u(x,0)$ at $t=0$ is symmetric:
\begin{equation}
u(-x)=u(x).
\end{equation}
\begin{equation}
u(x)=\frac{2}{\pi}\frac{d}{dx}\int_{k_1}^{k_2}\left[f(q,x)-f(q,-x)\right]dq
\end{equation}
\begin{remark} We emphasize that, in order for a primitive potential to be symmetric, it is sufficient but not necessary for the dressing functions $R_1$ and $R_2$ to be equal.
\end{remark}
We now denote $f(p,x)=f(p,x,0)$ and set $t=0$ in Eq.~\eqref{eq:fsym}:
\begin{equation}
e^{2px}f(p,x)+\frac{R(p)}{\pi}\left[\int_{k_1}^{k_2} \frac{f(q,x)}{p+q}dq-\fint_{k_1}^{k_2}\frac{f(q,-x)}{p-q}dq\right]=R(p),\quad p\in [k_1,k_2].\label{eq:fsymt}
\end{equation}
We show that this equation can be solved analytically. Introduce the variable $s=p^2$ and expand $f(p,x)$ as a Taylor series in $x$, separating the even and odd coefficients in the following way:
\begin{equation}
f(p,x)=\sum_{k=0}^{\infty}\frac{1}{(2k)!}x^{2k}f_k(s)+\sum_{k=0}^{\infty}\frac{1}{(2k+1)!}x^{2k+1}\sqrt{s}h_k(s),\quad s=p^2.
\end{equation}
Plugging this into \eqref{eq:fsymt} and collecting powers of $x$, we obtain the following system of equations on $f_k(s)$ and $h_k(s)$, where $k$ is a non-negative integer:
\begin{equation}
f_k(s)+R(\sqrt{s})H[f_k](s)=R(\sqrt{s})\delta_{0k}-\sum_{i=0}^{k-1}{2k \choose 2i}2^{2k-2i}s^{k-i}f_i(s)-\sum_{j=0}^{k-1}{2k \choose 2j+1}2^{2k-2j-1}s^{k-j}h_j(s), \label{eq:f_k}
\end{equation}
\begin{equation}
h_k(s)-R(\sqrt{s})H[h_k](s)=-\sum_{i=0}^{k}{2k+1 \choose 2i}2^{2k-2i+1}s^{k-i}f_{i}(s)-\sum_{j=0}^{k-1}{2k+1 \choose 2j+1}2^{2k-2j}s^{k-j}h_j(s).\label{eq:h_k}
\end{equation}
Here $H$ is the Hilbert transform on the interval $[k_1^2,k_2^2]$:
\begin{equation}
H[\psi(s)]=\frac{1}{\pi}\fint_{k_1^2}^{k_2^2}\frac{\psi(s')}{s'-s}ds'.
\end{equation}
The corresponding primitive potential is given by
\begin{equation}
u(x)=\frac{2}{\pi}\sum_{k=0}^{\infty} \frac{x^{2k}}{(2k)!}\int_{k_1^2}^{k_2^2} h_k(s')ds'. \label{eq:useries}
\end{equation}
Equations \eqref{eq:f_k}-\eqref{eq:h_k} can be solved recursively for $f_k$ and $h_k$ provided that we know how to invert the operators $1\pm R(\sqrt{s}) H$. This can be done explicitly using the following proposition.

\begin{proposition} \label{prop:inversion} Let $\alpha(s)$ be a H\"older-continuous function on the interval $[k_1^2,k_2^2]$. 
The integral operator $L_{\alpha}$ defined by
\begin{equation}
L_{\alpha}[\psi(s)] = \psi(s) + \tan(\pi \alpha(s)) H[\psi(s)]
\end{equation}
has a unique inverse given by
\begin{equation}
L_{\alpha}^{-1}[\varphi(s)] 
= \cos^2(\pi \alpha(s)) \varphi(s) - \sin(\pi \alpha(s)) e^{-\pi H [\alpha(s)]} H [\cos(\pi \alpha(s)) e^{\pi H [\alpha(s)]} \varphi(s)].
\end{equation}
If $\alpha$ is constant, then $L^{-1}_{\alpha}$ can be written as
\begin{equation}
L_{\alpha}^{-1} [\varphi(s)] 
= \cos^2(\pi \alpha) \varphi(s) - \sin(\pi \alpha) \cos(\pi \alpha) \left( \frac{s - k_1^2}{k_2^2 - s} \right)^{\alpha} H \left[ \left( \frac{k_2^2-s}{s-k_1^2} \right)^{\alpha} \varphi(s) \right].
\end{equation}
\end{proposition}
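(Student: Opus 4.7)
The approach is to convert the singular integral equation $L_\alpha[\psi]=\varphi$ into an additive Riemann--Hilbert problem via the Cauchy transform, solve it by constructing an appropriate outer function, and then read off the inversion formula from the Plemelj--Sokhotski formulas. Concretely, set
\begin{equation*}
\Phi(z)=\frac{1}{2\pi i}\int_{k_1^2}^{k_2^2}\frac{\psi(s')}{s'-z}\,ds',
\end{equation*}
so that $\Phi^+-\Phi^-=\psi$ and $\Phi^++\Phi^-=-iH[\psi]$ on $[k_1^2,k_2^2]$. Substituting into $\psi+\tan(\pi\alpha)H[\psi]=\varphi$ and clearing $\cos(\pi\alpha)$ gives the jump condition
\begin{equation*}
e^{i\pi\alpha(s)}\Phi^+(s)-e^{-i\pi\alpha(s)}\Phi^-(s)=\cos(\pi\alpha(s))\,\varphi(s).
\end{equation*}

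The next step is to remove the oscillating coefficient by introducing the outer function. Define
\begin{equation*}
X(z)=\exp\!\Bigl(-\int_{k_1^2}^{k_2^2}\frac{\alpha(s')}{s'-z}\,ds'\Bigr),
\end{equation*}
whose boundary values satisfy $X^\pm(s)=e^{-\pi H[\alpha](s)}e^{\mp i\pi\alpha(s)}$, so that $e^{i\pi\alpha}X^+=e^{-i\pi\alpha}X^-=e^{-\pi H[\alpha]}$. Writing $\Phi=X\Psi$, the jump condition collapses to
\begin{equation*}
\Psi^+(s)-\Psi^-(s)=\cos(\pi\alpha(s))\,e^{\pi H[\alpha](s)}\varphi(s)=:F(s).
\end{equation*}
I then take $\Psi(z)=\frac{1}{2\pi i}\int F(s')/(s'-z)\,ds'$, which is the unique solution vanishing at infinity with that jump.

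To extract $\psi$, I use the Plemelj identities $\Psi^\pm=\tfrac12(\pm F-iH[F])$ and compute
\begin{equation*}
\psi=X^+\Psi^+-X^-\Psi^-=e^{-\pi H[\alpha]}\bigl(e^{-i\pi\alpha}\Psi^+-e^{i\pi\alpha}\Psi^-\bigr).
\end{equation*}
Expanding the bracket and combining exponentials yields $\cos(\pi\alpha)F-\sin(\pi\alpha)H[F]$, which after substituting $F=\cos(\pi\alpha)e^{\pi H[\alpha]}\varphi$ is exactly the advertised inversion formula. For the constant-$\alpha$ specialization, I evaluate
\begin{equation*}
\pi H[\alpha](s)=\alpha\fint_{k_1^2}^{k_2^2}\frac{ds'}{s'-s}=\alpha\ln\frac{k_2^2-s}{s-k_1^2},
\end{equation*}
so that $e^{\pm\pi H[\alpha]}=\bigl((k_2^2-s)/(s-k_1^2)\bigr)^{\pm\alpha}$, producing the stated cleaner form.

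The expected obstacle is uniqueness, i.e.\ showing $L_\alpha$ has trivial kernel so the constructed $L_\alpha^{-1}$ is really a two-sided inverse. Any homogeneous solution produces, via the same chain, a function $\Psi$ analytic off $[k_1^2,k_2^2]$ with no jump; since $X(z)\to 1$ at infinity and $\Phi$ vanishes there, $\Psi$ is entire and tends to zero, hence vanishes identically, provided the endpoint bounds at $k_1^2,k_2^2$ rule out the ambiguity coming from the $\pm\alpha$ branches of $X$. The Hölder continuity of $\alpha$ together with the integrability condition implicit in the class of $\psi$ on which $H$ acts handles this in the standard Muskhelishvili framework, so no residual constants appear and the inverse is unique. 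Verifying that $L_\alpha\circ L_\alpha^{-1}=\mathrm{id}$ can be done directly, as a sanity check, using $\cos^2(\pi\alpha)+\sin^2(\pi\alpha)=1$ and the Poincaré--Bertrand formula for compositions of Hilbert transforms, but the Riemann--Hilbert derivation already gives the construction essentially by design.
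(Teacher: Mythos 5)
Your argument is correct and follows essentially the same route as the paper's proof: both reduce $L_\alpha[\psi]=\varphi$ to a Riemann--Hilbert problem for the Cauchy transform of $\psi$, factor out the homogeneous solution $\exp\left(\int_{k_1^2}^{k_2^2}\frac{\alpha(r)}{s-r}\,dr\right)$ (your $X$ is the paper's $\Phi$ up to normalization of the Cauchy integral), solve the resulting additive jump problem by Plemelj, and recover $\psi$ from the boundary values, with the constant-$\alpha$ case obtained from the same evaluation of $\pi H[\alpha]$. Your closing remarks on uniqueness and endpoint behavior are a welcome addition that the paper leaves implicit.
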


\begin{proof}
The singular integral equation $L_{\alpha}[\psi(s)] = \varphi(s)$ takes the form
\begin{equation}
\label{eq:intops} \psi(s) - \frac{\tan(\pi \alpha(s))}{\pi} \fint_{k_1^2}^{k_2^2} \frac{\psi(r)}{s-r}dr = \varphi(s).
\end{equation}
We invert this equation to express $\psi$ in terms of $\varphi$ by reformulating it as an inhomogeneous Riemann--Hilbert problem. The function $\Psi(s)$ defined by
\begin{equation*}
\Psi(s) = \frac{1}{\pi} \int_{k_1^2}^{k_2^2} \frac{\psi(r)}{s-r}dr
\end{equation*}
is holomorphic in $s \in \mathbb{C} \setminus [k_1^2,k_2^2]$. The boundary values of $\Psi$ from the right and the left for $s \in [k_1^2,k_2^2]$ satisfy
\begin{equation}
\frac{i}{2} (\Psi^+(s) - \Psi^-(s)) = \psi(s), \quad\frac{1}{2}(\Psi^+ (s) + \Psi^-(s)) = \frac{1}{\pi} \fint_{k_1^2}^{k_2^2} \frac{\psi(r)}{s-r} dr.
\label{eq:Psijumps}
\end{equation}
The integral equation (\ref{eq:intops}) is then equivalent to the Privalov problem
\begin{equation}\label{eq:Priv}
\Psi^+(s) - e^{-2i\pi \alpha(s)} \Psi^-(s) =-2i \cos(\pi \alpha(s)) e^{-i \pi \alpha(s)} \varphi(s)
\end{equation}
where $\Psi$ is normalized by the asymptotic behavior $\Psi(s) \to 0$ as $s \to \infty$. \\

To be able to apply the Plemelj formula to solve the Privalov problem (\ref{eq:Priv}) we first need to remove the multiplicative factor in front of $\Psi^-$. We do this by looking for $\Psi$ in the form $\Psi(s) = \Phi(s) \Xi(s)$. Here the functions $\Phi(s)$ and $\Xi(s)$ are holomorphic in $\mathbb{C} \setminus [k_1^2,k_2^2]$, and satisfy the following conditions.

The function $\Phi(s)$ satisfies the corresponding homogeneous Riemann--Hilbert problem
\begin{equation*} \Phi^+(s) = e^{-2i \pi \alpha(s)} \Phi^-(s) \end{equation*}
and has the asymptotic behavior $\Phi(s) \to 1$ as $s \to \infty$. Such a $\Phi(s)$ is given by
\begin{equation*}
\Phi(s) = \exp\left( \int_{k_1^2}^{k_2^2} \frac{\alpha(r)}{s-r} dr \right).
\end{equation*}
The boundary values of $\Phi$ are
\begin{equation}
\Phi^{\pm}(s) = \exp(-\pi H[\alpha(s)] \mp i \pi \alpha(s)) \label{eq:Phijumps}
\end{equation}
for $s \in [k_1^2,k_2^2]$.
Note that $\Phi \to \Phi^{-1}$ under the transformation $\alpha \to -\alpha$. 

The function $\Xi(s)$ satisfies the jump condition
\begin{equation*}
\Xi^+(s) - \Xi^-(s) = \cos(\pi \alpha(s)) e^{-i\pi \alpha(s)}\frac{-2i \varphi(s)}{\Phi^+(s)} = -2i\cos(\pi \alpha(s))e^{\pi H [\alpha(s)]}\varphi(s)
\end{equation*}
for $s \in [k_1^2,k_2^2]$ and has the asymptotic behavior $\Xi(s) \to 0$ as $s \to \infty$. By the Plemelj formula, $\Xi(s)$ is given by
\begin{equation*}
\Xi(s) = \frac{1}{\pi} \int_{k_1^2}^{k_2^2} \frac{\cos(\pi \alpha(r)) e^{\pi H[\alpha(r)]}\varphi(r)}{s-r } dr =H[\cos(\pi \alpha(s)) e^{\pi H[\alpha(s)]}\varphi(s)].
\end{equation*}
The boundary values of $\Xi$ are
\begin{equation}
\Xi^\pm (s) = H[\cos(\pi \alpha(s)) e^{\pi H[\alpha(s)]}\varphi(s)] \mp i \cos(\pi \alpha(s)) e^{\pi H [\alpha(s)]}\varphi(s)\label{eq:Xijumps}
\end{equation}
for $s \in [k_1^2,k_2^2]$. \\

We now evaluate $\psi(s)$ using \eqref{eq:Psijumps}, \eqref{eq:Phijumps} and \eqref{eq:Xijumps}:
\begin{align*}  \psi(s) = & \frac{i}{2} (\Psi^+(s) - \Psi^-(s))  = \frac{i}{2} (\Phi^+(s) \Xi^+(s) - \Phi^-(s) \Xi^-(s)) \\
 = & \cos^2(\pi \alpha(s)) \varphi(s) - \sin(\pi \alpha(s)) e^{-\pi H [\alpha(s)]} H [\cos(\pi \alpha(s)) e^{\pi H [\alpha(s)]} \varphi(s)],
 \end{align*}
 proving the proposition.
The result for constant $\alpha$ comes from the well-known fact that
\begin{equation} \pi H[1] = \log|s-k_2^2|-\log|s-k_1^2|. \end{equation}
\end{proof}

Using this proposition with $\alpha(s)=\tan^{-1} R(\sqrt{s})/\pi$, we can recursively solve equations \eqref{eq:f_k}-\eqref{eq:h_k} and obtain $u(x)$ as a power series in $x$.

\section{The case of constant $R$}

As an example, we calculate the first two coefficients of $u(x)$ as a Taylor series in the case when $R$ is a constant positive function. Let $\alpha = \tan^{-1}(R)/\pi$, then $0<\alpha<1$. By Prop.~\ref{prop:inversion}, the operators
\begin{equation*}
L_{\pm\alpha} [\psi(s)] = \psi(s) \pm \tan(\pi \alpha) H [\psi(s)]
\end{equation*}
are inverted by
\begin{equation*}
L_{\pm\alpha}^{-1}[\varphi(s)] = \cos^2 (\pi \alpha) \varphi(s) \mp \sin(\pi \alpha) \cos(\pi \alpha) a^{\pm 1}(s) H [a^{\mp 1}(s) \varphi(s)],
\end{equation*}
where the function
\begin{equation}
a(s) = \left(\frac{s-k_1^2}{k_2^2-s}\right)^{\alpha} 
\end{equation}
is continuous on $[k_1^2,k_2^2)$ and has an integrable singularity at $s=k_2^2$.
The equations \eqref{eq:f_k}-\eqref{eq:h_k} determining $f_0,h_0,f_1,h_1$ are 
\begin{align*}
L_{\alpha} [f_0(s)] &= \tan(\pi \alpha), \\
L_{-\alpha} [h_0(s)] &= -2 f_0(s), \\
L_{\alpha} [f_1(s)] &= -4s h_0(s) - 4s f_0(s), \\
L_{-\alpha} [h_1(s)] &= -6 f_1(s) - 12 s h_0(s) - 8 s f_0(s). 
\end{align*}
We compute
\begin{align*}L_{\alpha}^{-1}[1] &= \cos(\pi \alpha) a(s),\\
L_{-\alpha}^{-1}[a(s)] &= \frac{1}{2}(a(s) + a^{-1}(s)),\\
L_{\alpha}^{-1} [sa^{-1}(s)] &= \frac{s}{2} (a(s) + a^{-1}(s)) - \alpha(k_2^2-k_1^2)  a(s),  \\
L_{-\alpha}^{-1} [sa(s)] &= \frac{s}{2} (a(s) + a^{-1}(s)) - \alpha(k_2^2-k_1^2) a^{-1}(s). \end{align*}
We therefore obtain
\begin{align*} f_0(s)  & = \tan(\pi \alpha) L_{\alpha}^{-1}[1] = \sin(\pi \alpha) a(s),  \\
h_0(s) & = -2\sin(\pi \alpha) L_{-\alpha}^{-1}[a(s)] = - \sin(\pi \alpha)(a(s) + a^{-1}(s)), \\
f_1(s) & = 4 \sin(\pi \alpha) L_{\alpha}^{-1}[sa^{-1}(s)]  = 2 \sin(\pi \alpha) s (a(s) + a^{-1}(s)) - 4  \alpha \sin(\pi \alpha) (k_2^2-k_1^2)a(s), \\ 
h_1(s) &= 24 (k_2^2-k_1^2) \alpha \sin(\pi \alpha) L_{-\alpha}^{-1}[a(s)] - 8 \sin(\pi \alpha) L_{-\alpha}^{-1}[sa(s)] \\
 &=(k_2^2-k_1^2)\alpha\sin(\pi\alpha)(12a(s)+20a^{-1}(s))-4\sin(\pi\alpha) s(a(s)+a^{-1}(s)).
\end{align*}
The integrals
\begin{align*} \int_{k_1^2}^{k_2^2} a(s) ds &= \int_{k_1^2}^{k_2^2} a^{-1}(s) ds = \frac{\pi (k_2^2-k_1^2)\alpha}{\sin(\pi \alpha)}, \\
 \int_{k_1^2}^{k_2^2} s a(s) ds &= \frac{\pi\alpha}{2 \sin(\pi \alpha)}( (k_2^4-k_1^4) + \alpha(k_2^2-k_1^2)^2), \\
 \int_{k_1^2}^{k_2^2} s a^{-1}(s) dp &= \frac{\pi \alpha}{2 \sin(\pi \alpha)}((k_2^4-k_1^4) - \alpha(k_2^2-k_1^2)^2), \end{align*}
allow us to compute
\begin{align*} \frac{2}{\pi} \int_{k_1^2}^{k_2^2} h_0(s) ds &= -4  (k_2^2-k_1^2)\alpha, \\
\frac{2}{\pi} \int_{k_1^2}^{k_2^2} h_1(s) ds &=
8(k_2^2-k_1^2)\alpha(4(k_2^2-k_1^2)\alpha - (k_2^2+k_1^2)),
\end{align*}
therefore by Equation~\eqref{eq:useries} we get
\begin{equation} \label{eq:2ndordertaylor} u(x) = -4\alpha(k_2^2-k_1^2) + 4 \alpha(k_2^2-k_1^2)(4 \alpha (k_2^2-k_1^2) - (k_2^2+k_1^2)) x^2 + O(x^4).  \end{equation}
We know that $R=1$ (hence $\alpha=1/4$) and $k_1=0$ produces the exact solution $u(x)=-k_2^2$, and indeed by the above formula we get $u_0 = -k_2^2$ and $u_1 = 0$ in this case. 

Formula \eqref{eq:2ndordertaylor} has some interesting implications.
In the limit as $R \to 0$ we observe that $u(0) \to 0$ and $u''(0) \to 0$.
In the limit as $R \to \infty$ we observe that $u(0) \to -2(k_2^2 - k_1^2)$ and $u''(0) \to 4(k_2^2 - 2 k_1^2)$.
Note that if $k_2^2 > 2 k_1^2$ then $u''(0)$ approaches a positive number from below as $R \to \infty$, but if $k_2^2 < 2 k_1^2$ then $u''(0)$ approaches a negative number.
If $k_2^2 < 2 k_1^2$ we see that in fact $u''(0)$ is negative for all $R$.
On the other hand, if $k_2^2 \ge 2 k_1^2$ then $u''(0)$ will be negative for $R \in (0,\tan(\pi (k_2^2 - k_1^2)/(k_2^1 + k_1^2)))$, $u''(0)$ will be positive for $R \in (\tan(\pi (k_2^2 - k_1^2)/(k_2^1 + k_1^2)),\infty)$, and $u''(0) = 0$ for $R = 0$ or $R = \tan(\pi (k_2^2 - k_1^2)/(k_2^1 + k_1^2))$.


\label{ch:Rconstant}

\section{One-zone symmetric potential}

In this section, we show that the dressing $R_1=R_2=1$ on the interval $[k_1,k_2]$ produces the elliptic one-gap potential
\begin{equation}
u(x)=2\wp(x+i\omega'-\omega)+e_3. \label{eq:uwp}
\end{equation}
Previously, in the papers \cite{ZDZ16,ZZD16}, the second and third authors showed that this potential arises from the dressing
\begin{equation}
R_1(p)=\frac{1}{R_2(p)}=\sqrt{\frac{(q-k_1)(q+k_2)}{(k_2-q)(q+k_1)}}. \label{eq:otherdressing}
\end{equation}
Our new result uses the notation and calculations of \cite{ZDZ16,ZZD16}, but relies on the results of Chapter \ref{ch:Rconstant}.

First, we observe that if
\begin{equation*}
R_2(p)=1/R_1(p),
\end{equation*}
then equations \eqref{eq:jumps1}-\eqref{eq:jumps2} reduce to
\begin{equation*}
\chi^+(ip,x,t)=iR_1(p)e^{-2px+8p^3t}\chi^+(-ip,x,t),\quad \chi^-(ip,x,t)=-iR_1(p)e^{-2px+8p^3t}\chi^-(-ip,x,t),
\end{equation*}
for $p\in[k_1,k_2]$. When $R_1(p)=1$ and $t=0$, the contour problem for $\chi(k,x)=\chi(k,x,0)$ is
\begin{equation}
\chi^+(ip,x)=ie^{-2px}\chi^+(-ip,x),\quad \chi^-(ip,x)=-ie^{-2px}\chi^-(-ip,x),\quad p\in[k_1,k_2].\label{eq:ellipticred}
\end{equation}

Our goal is to find the function $\chi$ satisfying \eqref{eq:ellipticred}. This can in principle be done using the inductive procedure described in Chapter \ref{ch:Rconstant} with $R=1$ and $\alpha=1/4$. However, we will need only the first Taylor coefficient. Indeed, if we set $x=0$, then
\begin{equation*}
f(p,0)=f_0(p)=\sin(\pi \alpha)a(s)=\frac{1}{\sqrt{2}}\left(\frac{s-k_1^2}{k_2^2-s}\right)^{1/4}.
\end{equation*}
Hence we find that the function
\begin{equation*}
\xi(k)=\chi(k,0)=1+\frac{i}{\pi}\int_{k_1}^{k_2}\frac{f(q,0)}{k-iq}dq-\frac{i}{\pi}\int_{k_1}^{k_2}\frac{f(q,0)}{k+iq}dq=
\left(\frac{k^2+k_1^2}{k^2+k_2^2}\right)^{1/4}
\end{equation*}
satisfies equation \eqref{eq:ellipticred} with $x=0$:
\begin{equation}
\xi^+(ip)=i\xi^+(-ip),\quad \xi^-(ip)=-i\xi^-(-ip),\quad p\in[k_1,k_2].\label{eq:chix=0}
\end{equation}

We now look for a solution of \eqref{eq:ellipticred} in the form $\chi(k,x)=\xi(k)\chi_1(k,x)$, where $\chi_1(k,x)$ satisfies the condition
\begin{equation}
\chi_1^+(ip,x)=e^{-2px}\chi_1^+(-ip,x),\quad \chi_2^-(ip,x)=e^{-2px}\chi_2^-(-ip,x),\quad p\in[k_1,k_2].\label{eq:chi1}
\end{equation}
Such a function has already been found in \cite{DZZ16,ZDZ16}. Let $e_1,e_2,e_3$ be defined by the equations
\begin{equation*}
k_1^2=e_2-e_3,\quad k_2^2=e_1-e_3,\quad e_1+e_2+e_3=0. 
\end{equation*}
Let $\wp(z)=\wp(z|\omega,\omega')$ be the Weierstrass function with half-periods $\omega$ and $\omega'$, where $\omega$ is real and $\omega'$ is purely imaginary, such that
\begin{equation*}
e_1=\wp(\omega),\quad e_2=\wp(\omega+i\omega'),\quad e_3=\wp(i\omega').
\end{equation*}
We introduce, as in \cite{DZZ16,ZDZ16}, the variable $z$ via the relation
\begin{equation}
k^2=e_3-\wp(z).
\end{equation}
This relation expresses the complex plane $\mathbb{C}$ with cuts $[ik_1,ik_2]$ and $[-ik_1,-ik_2]$ along the imaginary axis as a double cover of the period rectangle of $\wp$. The Schr\"odinger equation \eqref{eq:Sch} with potential given by \eqref{eq:uwp} is the Lam\'e equation
\begin{equation}
\varphi''-[2\wp(x-\omega-i\omega')+\wp(z)]\varphi=0.
\end{equation}
The Lam\'e equation has a solution
\begin{equation}
\varphi(x,z)=\frac{\sigma(x-\omega-i\omega'+z)\sigma(\omega+i\omega')}{\sigma(x-\omega-i\omega')\sigma(\omega+i\omega'-z)}e^{-\zeta(z)x}
\end{equation}
which has an essential singularity $\varphi(x,z)\sim e^{-x/z}$ near the point $z=0$ (corresponding to $k=\infty$). Therefore the function
\begin{equation}
\chi_1(k,x)=\varphi(x,z)e^{-ikx}=\varphi(x,z)e^{-ix/\sn z}
\end{equation}
tends to 1 as $k\to \infty$. It is easy to check that $\chi_1(k,x)$ satisfies the contour problem \eqref{eq:chi1}. Putting everything together, we obtain the following result.

\begin{proposition} Let $k_2>k_1>0$. Then the function
\begin{equation}
\chi(k,x)=\left(\frac{k^2+k_1^2}{k^2+k_2^2}\right)^{1/4}\varphi(x,z)e^{-ikx},\quad k^2=e_3-\wp(z)
\end{equation} 
satisfies conditions \eqref{eq:limits}-\eqref{eq:conditionatpoles} with $R_1=R_2=1$ and $t=0$. The potential $u(x)$ defined by \eqref{eq:solutionofKdV} is the elliptic one-gap potential \eqref{eq:uwp}.

\end{proposition}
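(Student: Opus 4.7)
The plan is to verify the four defining conditions \eqref{eq:limits}--\eqref{eq:conditionatpoles} for the proposed $\chi(k,x) = \xi(k)\chi_1(k,x)$, and then identify $u(x)$ with the elliptic one-gap potential from the Lam\'e equation satisfied by $\varphi$. The jump condition is immediate from the factorization and the already-established relations \eqref{eq:chix=0} and \eqref{eq:chi1}: for $p \in [k_1,k_2]$,
\begin{equation*}
\chi^{\pm}(ip,x) = \xi^{\pm}(ip)\chi_1^{\pm}(ip,x) = \bigl(\pm i\,\xi^{\pm}(-ip)\bigr)\bigl(e^{-2px}\chi_1^{\pm}(-ip,x)\bigr) = \pm i e^{-2px}\chi^{\pm}(-ip,x),
\end{equation*}
which is \eqref{eq:ellipticred} and hence coincides with \eqref{eq:jumps1}--\eqref{eq:jumps2} for $R_1 = R_2 = 1$, $t = 0$. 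Analyticity off the cuts and continuity of the one-sided limits follow from the analyticity of each factor separately.

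For the endpoint condition \eqref{eq:conditionatpoles}, the factor $\xi(k)$ contributes singularities of order $|k \mp ik_2|^{-1/4}$ at $\pm ik_2$ and zeros of order $|k \mp ik_1|^{1/4}$ at $\pm ik_1$, while $\chi_1(k,x)$ is bounded near all four branch points because the map $z \mapsto k = \sqrt{e_3 - \wp(z)}$ is a holomorphic double cover of the period rectangle of $\wp$ ramified precisely at these points, and $\varphi(x,z)e^{-ikx}$ extends analytically in $z$ across them. Thus \eqref{eq:conditionatpoles} holds with $\alpha = 1/4 < 1$. The normalization $\chi \to 1$ as $k \to \infty$ follows from $\xi(k) = 1 + O(1/k^2)$ and the standard behavior of the Baker--Akhiezer factor $\varphi(x,z)e^{-ikx} \to 1$ as $z \to 0$, where the essential singularity of $e^{-\zeta(z)x}$ cancels against $e^{-ikx}$ via the branch $k = i/z + O(z)$ determined by $k^2 = -1/z^2 + O(1)$ and $\zeta(z) = 1/z + O(z^3)$.

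Finally, to identify the potential, note that by construction $\varphi$ solves the Lam\'e equation $\varphi'' = [2\wp(x + i\omega' - \omega) + \wp(z)]\varphi$; substituting $\wp(z) = e_3 - k^2$ exhibits this as the Schr\"odinger equation at energy $k^2$ with potential $u_0(x) = 2\wp(x + i\omega' - \omega) + e_3$. Consequently $\chi = \xi\chi_1$ is a Baker--Akhiezer function for $u_0$, and expanding $\chi(k,x) = 1 + iQ(x)/(2k) + O(1/k^2)$ and matching the $1/k$ coefficient in the associated linear equation for $\chi$ yields $Q'(x) = u_0(x)$. By \eqref{eq:solutionofKdV} the primitive potential is then $u(x) = Q'(x) = 2\wp(x+i\omega'-\omega)+e_3$, which is \eqref{eq:uwp}.

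The main obstacle is the asymptotic analysis near $k = \infty$: one must carefully select the correct branch of the double cover $k(z)$ so that the essential singularity of $\varphi$ is cancelled by $e^{-ikx}$, and then expand the product $\sigma(x - \omega - i\omega' + z)\sigma(\omega + i\omega')/[\sigma(x - \omega - i\omega')\sigma(\omega + i\omega' - z)] \cdot e^{-\zeta(z)x - ikx}$ to first order in $z$ to verify simultaneously that $\chi_1 \to 1$ and that the next coefficient is exactly $iQ(x)/(2k)$ with $Q'(x)$ the one-gap potential. This is routine Baker--Akhiezer bookkeeping but requires care with signs and branch choices. All other steps reduce either to the decoupled relations \eqref{eq:chix=0} and \eqref{eq:chi1} or to standard properties of the Weierstrass $\sigma$, $\zeta$, and $\wp$ functions.
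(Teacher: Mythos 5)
Your argument follows the paper's route exactly: factor $\chi=\xi\chi_1$, use \eqref{eq:chix=0} and \eqref{eq:chi1} to obtain the decoupled relations \eqref{eq:ellipticred}, which imply \eqref{eq:jumps1}--\eqref{eq:jumps2} for $R_1=R_2=1$, and identify $u$ through the Lam\'e equation. The paper leaves the verifications of \eqref{eq:asymptotic2} and \eqref{eq:conditionatpoles} implicit, so your extra bookkeeping is in principle welcome; but one of your claims is false as stated. The function $\chi_1(k,x)=\varphi(x,z)e^{-ikx}$ is \emph{not} bounded at all four branch points: the factor $\sigma(\omega+i\omega'-z)$ in the denominator of $\varphi$ vanishes at $z=\omega+i\omega'$, which is exactly the ramification point lying over $k=\pm ik_1$ (since $\wp(\omega+i\omega')=e_2$ and $e_3-e_2=-k_1^2$). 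Because $z-(\omega+i\omega')\sim C\,(k\mp ik_1)^{1/2}$ near that point, $\chi_1$ blows up like $|k\mp ik_1|^{-1/2}$ there. The estimate \eqref{eq:conditionatpoles} is nevertheless saved by the quarter-order zero of $\xi(k)=\bigl((k^2+k_1^2)/(k^2+k_2^2)\bigr)^{1/4}$ at $\pm ik_1$: the product $\xi\chi_1$ behaves like $|k\mp ik_1|^{-1/4}$, so the condition holds with $\alpha=1/4$ at all four endpoints, just not for the reason you give. Everything else --- the jump computation, the branch choice $k=i/z+O(z)$ cancelling the essential singularity of $e^{-\zeta(z)x}$, the expansion $\xi(k)=1+O(1/k^2)$, and the identification of $Q'$ with the one-gap potential --- is correct and matches the paper.
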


In Section~\ref{sec:dressing}, we observed that an $N$-soliton potential is described using the dressing method in $2^N$ different ways. Since primitive potentials are limits of $N$-soliton potentials, it is also true that a primitive potential can be described using the dressing method in multiple ways, in other words by different pairs of functions $R_1$ and $R_2$. Here we observe an example of this behavior: the elliptic one-gap potential can be constructed using constant dressing functions $R_1=R_2=1$, or using the dressing \eqref{eq:otherdressing}.

\section{Acknowledgments}

The first and third authors gratefully acknowledge the support of NSF grant DMS-1715323. The second author gratefully acknowledges the support of NSF grant DMS-1716822.

\end{document}